\newtheorem{theorem}{Theorem}[section]
\newtheorem{proposition}[theorem]{Proposition}
\newtheorem{corollary}[theorem]{Corollary}
\newtheorem{definition}[theorem]{Definition}
\def\eproof{{\mbox{}\hfill\eproof}\medskip}
\begin{document}

\makeatletter

\def\tv{{\; \wr \;}}

\newcommand{\real}{{\mathbb{R}}}
\newcommand{\R}{{\mathbb{R}}}
\newcommand{\C}{{\mathbb{C}}}
\newcommand{\Q}{{\mathbb{Q}}}
\newcommand{\Z}{{\mathbb{Z}}}
\newcommand{\N}{{\mathbb{N}}}
\newcommand{\F}{{\mathbb{F}}}
\newcommand{\K}{{\mathbb{K}}}
\newcommand\mK{\mathcal{K}}
\newcommand\CC{\mathcal{C}}
\newcommand\DD{\mathcal{D}}
\newcommand\mL{\mathcal{L}}
\newcommand\mT{\mathcal{T}}
\newcommand\mF{\mathcal{F}}
\newcommand\mH{\mathcal{H}}
\newcommand\mN{\mathcal{N}}
\newcommand\cP{\mathcal{P}}
\newcommand\mC{{\mathscr C}}
\newcommand\Oh{{\cal O}}
\newcommand\comment[1]{\footnote{#1}}
\def\scrS{{\mathscr S}}
\def\alt{{\mathsf{alt}}}
\def\Poly{{\mathsf{Poly}}}
\def\Exp{{\mathsf{Exp}}}
\def\Log{{\mathsf{Log}}}
\def\Const{{\mathsf{Const}}}
\def\Alt{{\mathsf{Alt}}}
\def\Maj{{\mathsf{Maj}}}
\def\bfy{{\mathbf y}}


\def\BP{{\rm BP}}
\def\P{{\sf P}}
\def\FP{{\sf FP}}
\def\NP{{\sf NP}}
\def\coNP{{\sf coNP}}
\def\NC{{\sf NC}}
\def\FNC{{\sf FNC}}
\def\PSPACE{{\sf PSPACE}}
\def\NPSPACE{{\sf NPSPACE}}
\def\FPSPACE{{\sf FPSPACE}}
\def\PH{{\sf PH}}
\def\cSigma{{\mathsf \Sigma}}
\def\cPi{{\mathsf \Pi}}
\def\MAE{{\rm MA}{\mathsf \exists}_{\{ { \bf 0},{ \bf 1} \}ern-0.4pt\R}}
\def\MAV{{\rm MA}\{ { \bf 0},{ \bf 1} \}ern-0.5pt{\mathsf \forall}_{\{ { \bf 0},{ \bf 1} \}ern-0.4pt\R}}
\def\PHR{{\rm PH}_{\R}}
\def\NCR{{\rm NC}_{\R}}
\def\QHR{{\rm QH}_{\R}}
\def\EXPR{{\rm EXP}_{\R}}
\def\PAREXP{{\rm PAREXP}_{\R}}
\def\PATR{{\rm PAT}_{\R}}
\def\DPATR{{\rm DPAT}_{\R}}
\def\EXP{{\sf EXP}}
\def\PAR{{\rm PAR}}
\def\NPAR{{\rm NPAR}}
\def\coNPAR{{\rm coNPAR}}
\def\DNPAR{{\rm DNPAR}}
\def\VP{{\rm VP}}
\def\VNP{{\rm VNP}}
\def\coRP{{\sf coRP}}
\def\CNP{{\#\NP}}
\def\poly{\mbox{\small\sf poly}}
\def\CP{{\#\P}}
\def\Dig{{\rm D}}
\def\PR{{\rm P}_{\R}}
\def\NPR{{\rm NP}_{\R}}
\def\coNPR{{\rm coNP}_{\R}}
\def\co{{\rm co}}
\def\msL{\mathscr L}

\title{Pointers in Recursion: Exploring the Tropics \thanks{Work partially supported by ANR project ELICA -  ANR-14-CE25-0005}}

\author{Paulin Jacob\'e de Naurois
\institute{CNRS, Universit\'e Paris 13, Sorbonne Paris Cit\'e, LIPN,  UMR 7030, F-93430 Villetaneuse, France.} 
\email{\tt denaurois@lipn.univ-paris13.fr}}
\def\titlerunning{Pointers in Recursion: Exploring the Tropics}
\def\authorrunning{Paulin Jacob\'e de Naurois}

\maketitle
\thispagestyle{empty}

\begin{quote}
{\small {\bf Abstract.}
We translate the usual class of partial/primitive recursive functions to a pointer recursion framework, accessing actual input values via a pointer reading unit-cost function. These pointer recursive functions classes are proven equivalent to the usual partial/primitive recursive functions. Complexity-wise, this framework captures in a streamlined way most of the relevant sub-polynomial classes. Pointer recursion with the safe/normal tiering discipline of Bellantoni and Cook corresponds to polylogtime computation. We introduce a new, non-size increasing tiering discipline, called tropical tiering. Tropical tiering and pointer recursion, used with some of the most common recursion schemes, capture the classes logspace, logspace/polylogtime, ptime, and NC. Finally, in  a fashion reminiscent of the safe recursive functions, tropical tiering is expressed directly in the syntax of the function algebras, yielding the tropical recursive function algebras.

}
\end{quote}

\section*{Introduction}

Characterizing complexity classes without explicit reference to the computational model used for defining these classes, and without explicit bounds on the resources allowed for the calculus, has been a long term goal of several lines of research in computer science. One rather successful such line of research is recursion theory. The foundational work here is the result of Cobham~\cite{Cob:65}, who gave a characterization of polynomial time computable functions in terms of bounded recursion on notations - where, however, an explicit polynomial bound is used in the recursion scheme. Later on, Leivant~\cite{DBLP:conf/lics/Leivant91} refined this approach with the notion of tiered recursion: explicit bounds are no longer needed in his recursion schemes. Instead, function arguments are annotated with a static, numeric denotation, a \emph{tier},  and a tiering discipline is imposed upon the recursion scheme to enforce a polynomial time computation bound. A third important step in this line of research is the work of Bellantoni and Cook~\cite{DBLP:journals/cc/BellantoniC92}, whose safe recursion scheme uses only syntactical constraints akin to the use of only two tier values, to characterize, again, the class of polynomial time functions. 

Cobham's approach has also later on been fruitfully extended to other, important complexity classes. Results relevant to our present work, using explicitly  bounded recursion, are those of Lind~\cite{TR:Lind} for logarithmic space, and Allen~\cite{DBLP:journals/apal/Allen91} and Clote~\cite{Clote:89} for small parallel classes. 

Later on, Bellantoni and Cook's purely syntactical approach proved also useful for characterizing other complexity classes. Leivant and Marion~\cite{leivant:inria-00099077,DBLP:journals/tcs/LeivantM00} used a predicative version of the safe recursion scheme to characterize alternating complexity classes, while Bloch~\cite{DBLP:journals/cc/Bloch94}, Bonfante et al~\cite{DBLP:journals/iandc/BonfanteKMO16} and Kuroda\cite{DBLP:journals/cc/Kuroda04}, gave characterizations of small, polylogtime, parallel complexity classes. An important feature of these results is that they use, either explicitly or not, a tree-recursion on the input. This tree-recursion is implicitly obtained in Bloch's work by the use of an extended set of basic functions, allowing for a dichotomy recursion on the input string, while it is made explicit in the recursion scheme in the two latter works. As a consequence, these characterizations all rely on the use of non-trivial basic functions, and non-trivial data structures. Moreover, the use of distinct basic function sets and data structures make it harder to express these charcterizations in a uniform framework. 

Among all these previous works on sub-polynomial complexity classes, an identification is assumed between the argument of the functions of the algebra, on one hand, and the computation input on the other hand: an alternating, logspace computation on input $\overline{x}$ is denoted by a recursive function with argument $\overline{x}$. While this seems very natural for complexity classes above linear time, it actually yields a fair amount of technical subtleties and difficulties for sub-linear complexity classes. Indeed, following Chandra et al.~\cite{DBLP:journals/jacm/ChandraKS81} seminal paper, sub-polynomial complexity classes need to be defined with a proper, subtler model than the one-tape Turing machine: the random access Turing machine (RATM), where computation input is accessed via a unit-cost pointer reading instruction. RATM input is thus accessed via a read-only instruction, and left untouched during the computation - a feature quite different to that of a recursive function argument.  Our proposal here is to use a similar construct for reading the input in the setting of recursive functions: our functions will take as input pointers on the computation input, and one-bit pointer reading will be assumed to have unit cost. Actual computation input are thus implicit in our function algebras: the fuel of the computational machinery is only pointer arithmetics. This proposal takes inspiration partially from the Rational Bitwise Equations of ~\cite{DBLP:journals/iandc/BonfanteKMO16}.

Following this basic idea, we then  introduce a new tiering discipline, called \emph{tropical tiering}, to enforce a non-size increasing behavior on our recursive functions, with some inspirations taken from previous works of M. Hofmann~\cite{DBLP:journals/iandc/Hofmann03,DBLP:journals/tocl/HofmannS10}. Tropical tiering induces a polynomial interpretation in the tropical ring of polynomials (hence its name), and yields a characterization of logarithmic space. The use of different, classical recursion schemes yield characterizations of other, sub-polynomial complexity classes such as polylogtime,  NC, and the full polynomial time class. Following the approach of Bellantoni and Cook, we furthermore embed the tiering discipline directly in the syntax, with only finitely many different tier values - four tier values in our case, instead of only two tier values for the safe recursive functions, and provide purely syntactical characterizations of these complexity classes in a unified, simple framework. Compared to previous works, our framework uses a unique, and rather minimal set of unit-cost basic functions, computing indeed basic tasks, and a unique and also simple data structure. 


The paper is organized as follows. Section~\ref{sec:defs} introduces the notations, and the framework of pointer recursion. Section~\ref{sec:pr} applies this framework to primitive recursion. Pointer partial/primitive recursive functions are proven to coincide with their classical counterparts in Theorem~\ref{theo:pr}. Section~\ref{sec:safe} applies this framework to safe recursion on notations. Pointer safe recursive functions are proven to coincide with polylogtime computable functions in Theorem~\ref{theo:safe}. Tropical tiering is defined in Section~\ref{sec:tropics}. Proposition~\ref{prop:trop-inter} establishes the tropical interpretation induced by tropical tiering. Tropical recursive functions are then introduced in Subsection~\ref{subsec:trop}. 
Section~\ref{sec:LP} gives a sub-algebra of the former, capturing logspace/polylogtime computable functions in Theorem~\ref{thm:logspacepolylogtime}. Finally, Section~\ref{sec:alt} explores tropical recursion with substitutions, and provides a characterization of P in Theorem~\ref{thm:P} and of NC in Theorem~\ref{thm:NC}.

\section{Recursion}\label{sec:defs}

\subsection{Notations, and Recursion on Notations}

Data structures considered in our paper are finite words over a finite alphabet. For the sake of simplicity, we consider the finite, boolean alphabet $\{{\bf 0}, {\bf 1}\}$. The set of finite words over $\{{\bf 0}, {\bf 1}\}$ is denoted as $\{{\bf 0}, {\bf 1}\}^*$. 

Finite words over  $\{{\bf 0}, {\bf 1}\}$ are denoted with overlined variables names, as in $\overline{x}$. Single values in $\{{\bf 0}, {\bf 1}\}$ are denoted as plain variables names, as in $x$. The empty word is denoted by $\varepsilon$, while the dot symbol "." denotes the concatenation of two words as in $a . \overline{x}$, the finite word obtained by adding an $a$ in front of the word $\overline{x}$. Finally, finite arrays of boolean words are denoted with bold variable names, as in ${\bf x} = (\overline{x_1}, \cdots, \overline{x_n})$. When defining schemes, we will often omit the length of the arrays at hand, when clear from context, and use bold variable names to simplify notations. Similarly, for mutual recursion schemes, finite arrays of mutually recursive functions are denoted by a single bold function name. In this case, the \emph{width} of this function name is the size of the array of the mutually recursive functions. 

Natural numbers are identified with finite words over $\{ {\bf 0}, {\bf 1} \}$ via the usual binary encoding. Yet, in most of our function algebras, recursion is not performed on the numerical value of an integer, as in classical primitive recursion, but rather on  its boolean encoding, that is, on the finite word over $\{ {\bf 0}, {\bf 1} \}$ identified with it: this approach is denoted as \emph{recursion on notations}.  

\subsection{Turing Machines with Random Access}

When considering sub-polynomial complexity class, classical Turing Machines often fail to provide a suitable cost model. A crucial example is the class DLOGTIME: in logarithmic time, a classical Turing machine fails to read any further than the first $k.\log(n)$ input bits. In order to provide a suitable time complexity measure for sub-polynomial complexity classes, Chandra et al~\cite{DBLP:journals/jacm/ChandraKS81} introduced the Turing Machine with Random Access (RATM), whose definition follows.

\begin{definition}{RATM}\\
A Turing Machine with Random Access (RATM) is a Turing machine with no input head, one (or several) working tapes and a special \emph{pointer}  tape, of logarithmic size, over a binary alphabet. The Machine has a special \emph{Read} state such that, when the binary number on the pointer tape is $k$, the transition from the Read state consists in writing the $k^{th}$ input symbol on the (first) working tape. 

\end{definition}

\subsection{Recursion on Pointers}

In usual recursion theory, a function computes a value on its input, which is given explicitly as an argument. This, again, is the case in classical primitive recursion. While this is suitable for describing explicit computation on the input, as, for instance for single tape Turing Machines, this is not so for describing input-read-only computation models, as, for instance,  RATMs. In order to propose a suitable recursion framework for input-read-only computation, we propose the following \emph{pointer recursion} scheme, whose underlying idea is pretty similar to that of the RATM. 

As above, recursion data is given by finite, binary words, and the usual recursion on notation techniques on these recursion data apply. The difference lies in the way the actual computation input is accessed: in our framework, we distinguish two notions, the \emph{computation input}, and the \emph{function input}: the former denotes the input of the RATM, while the latter denotes the input in the function algebra. For classical primitive recursive functions, the two coincide, up to the encoding of integer into binary strings. In our case, we assume an explicit encoding of the former into the latter, given by the two following constructs.


Let $\overline{w} = w_1. \cdots. w_n \in \{ {\bf 0}, {\bf 1} \} ^*$ be a computation input. To $\overline{w}$,  we associate two constructs,
\begin{itemize}
\item the {\tt Offset}: a finite word over $\{ {\bf 0}, {\bf 1} \}$, encoding in binary the length $n$ of $\overline{w}$, and
\item the {\tt Read} construct, a 1-ary function, such that, for any binary encoding $\overline{i}$ of an integer $0 < i \le n$, ${\tt Read}(\overline{i}) = w_i$, and, for any other value $\overline{v}$, ${\tt Read}(\overline{v}) = \varepsilon$.
\end{itemize}

Then, for a given \emph{computation input} $\overline{w}$, we fix accordingly the semantics of the {\tt Read} and {\tt Offset}  constructs as above, and a \emph{Pointer Recursive function} over $\overline{w}$ is evaluated with sole  input the {\tt Offset}, accessing computation input bits via the {\tt Read} construct. For instance, under these conventions, ${\tt Read}({\tt hd} ({\tt Offset}))$ outputs the first bit of the computational input $\overline{w}$. In some sense, the two constructs depend on $\overline{w}$, and can be understood as functions on $\overline{w}$. However, in our approach, it is important to forbid $\overline{w}$ from appearing explicitly as a function argument  in the syntax of the function algebras we will define, and from playing any role in the composition and recursion schemes. Since $\overline{w}$ plays no role at the syntactical level - its only role is at the semantical level-  we chose to remove it completely $\overline{w}$ from the syntactical definition of our functions algebras.

%


\section{Pointers Primitive Recursion}\label{sec:pr}

Let us first detail our pointer recursive framework for the classical case of primitive recursion on notations.

\begin{paragraph}{Basic pointer functions.}
Basic pointer functions are the following kind of functions:

\begin{enumerate}
\item Functions manipulating finite words over $\{ { \bf 0},{ \bf 1} \}$. For any $a \in \{ { \bf 0},{ \bf 1} \}, \overline{x} \in \{ { \bf 0},{ \bf 1} \}^*$, 
$$
\begin{array}{rclrclrcl}
{\tt hd}(a.\overline{x}) & = & a &\;\;\; {\tt tl}(a.\overline{x}) &  = & \overline{x} &\;\;\; {\tt s}_0(\overline{x}) &=& 0.\overline{x}\\
{\tt hd}(\varepsilon) & = & \varepsilon & {\tt tl}(\varepsilon) &  = & \varepsilon & {\tt s}_1(\overline{x}) &=& 1.\overline{x}\\
\end{array}$$
\item Projections. For any $n \in \N$, $1 \le i \le n$, 
$${\tt Pr}_i^n(\overline{x_1}, \cdots, \overline{x_n}) = \overline{x_i}$$ 
\item and, finally, the {\tt Offset } and {\tt Read} constructs, as defined above.
\end{enumerate}
\end{paragraph}
\begin{paragraph}{Composition.}

Given functions $g$, and $h_1, \cdots, h_n$, we define $f$ by composition as 
$$f({\bf x}) = g(h_1({\bf x}), \cdots, h_n({\bf x})).$$
\end{paragraph}
\begin{paragraph}{Primitive Recursion on Notations.} Let $\perp$ denote non-terminating computation.
 Given functions $h$, $g_0$ and $g_1$, we define $f$ by primitive recursion on notations as 
 \begin{eqnarray*}
f(\varepsilon, {\bf y}) &=& h({\bf y})\\
f({\tt s}_a(\overline{x}),  {\bf y}) &=& \left \{ \begin{array}{lr} g_a(\overline{x}, f(\overline{x},  {\bf y}), {\bf y}) & \mbox{ if } f(\overline{x},  {\bf y}) \neq \perp \\ \perp & \mbox{ otherwise.} \end{array} \right.
\end{eqnarray*}

\end{paragraph}
\begin{paragraph}{Minimization.}
For a function $s$, denote by $s^{(n)}$ its $n^{th}$ iterate. Then, given  a function $h$, we define $f$ by minimization on $\overline{x}$ as  
$$ \mu \overline{x}( h( \overline{x}, {\bf y})) = \left\{ \begin{array}{lr} \perp & \mbox{ if } \forall t \in \N, {\tt hd}(h( s_0^{(t)}(\varepsilon),{\bf y})) \neq {\tt s}_1(\varepsilon) \\ s_0^{(k)}(\varepsilon) & \mbox{where } k = \min\{t \; : {\tt hd}(h( s_0^{(t)}(\varepsilon),{\bf y})) = {\tt s}_1(\varepsilon) \} \;\; \mbox{ otherwise.} \end{array} \right.$$
In other words, a function $f$ defined by  minimization on $h$  produces the shortest sequence of $0$ symbols satisfying a simple condition on $h$ , if it exists.
\end{paragraph}

Let now $PR_{not}^{point}$ be the closure of basic pointer functions under composition and primitive recursion on notations, and $REC_{not}^{point}$ be the closure of basic pointer functions under composition, primitive recursion on notations, and minimization. Then, as expected,

\begin{theorem}\label{theo:pr}
Modulo the binary encoding of natural integers, $PR_{not}^{point}$ is the classical class of primitive recursive functions, and  $REC_{not}^{point}$ is the classical class of recursive functions.
\end{theorem}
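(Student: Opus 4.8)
The plan is to prove the two class equalities by establishing both inclusions, in each case by structural induction on the construction of the functions, treating the primitive recursive case first and then adding the minimization clause for the recursive case. Throughout, I identify a pointer recursive function $f$ with the semantic map $\overline{w} \mapsto f[\overline{w}]$ it computes, where $f[\overline{w}]$ denotes the value of $f$ when {\tt Offset} is set to the binary encoding of the length of $\overline{w}$ and {\tt Read} is interpreted according to $\overline{w}$; the comparison with the classical classes is then a comparison of these one-argument maps, tuples of arguments being handled as usual through a primitive recursive pairing.

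For the inclusion $PR_{not}^{point} \subseteq$ (classical primitive recursive), I would argue by induction that for every pointer function $f$ the map $\overline{w} \mapsto f[\overline{w}]$ is classically primitive recursive. The device is to reintroduce $\overline{w}$ as an explicit argument that is threaded, unchanged, through composition and recursion. The basic functions {\tt hd}, {\tt tl}, ${\tt s}_0$, ${\tt s}_1$ and ${\tt Pr}_i^n$ do not depend on $\overline{w}$ and are already primitive recursive. The two genuinely new constructs are handled directly: {\tt Offset} is the length map $\overline{w} \mapsto |\overline{w}|$ in binary, which is primitive recursive, and {\tt Read} is the bit-extraction map $(\overline{w}, \overline{i}) \mapsto w_i$, with value $\varepsilon$ off range, also primitive recursive. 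Once $\overline{w}$ is carried along, the composition and primitive-recursion-on-notations schemes of the pointer framework coincide with the corresponding classical schemes, so the induction goes through; adding the minimization clause gives the analogous inclusion $REC_{not}^{point} \subseteq$ (classical recursive).

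For the reverse inclusion I would rely on the classical equivalence between primitive recursion on notations and ordinary primitive recursion, so that every classical primitive recursive function can be presented using only the schemes already available in the pointer framework, namely composition and primitive recursion on notations together with the word-basic functions. The one missing ingredient is access to the computation input: in the pointer framework $\overline{w}$ is not an argument, only {\tt Offset} and {\tt Read} are. I would therefore construct a single pointer primitive recursive function $W$ with $W[\overline{w}] = \overline{w}$, reconstructing the word from its length and its bits, and then realize an arbitrary classical function $g$ in the pointer framework as the composition of the notation form of $g$ with $W$. For the recursive case the minimization scheme of the pointer framework realizes classical unbounded search, the witness being returned in the form $s_0^{(k)}(\varepsilon)$, while $W$ remains purely primitive recursive, which yields $REC_{not}^{point} \supseteq$ (classical recursive).

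The main obstacle is the reconstruction $W$. Reading all of $\overline{w}$ requires iterating $n = |\overline{w}|$ times, one {\tt Read} per position, whereas a recursion on notations performed directly on {\tt Offset} unfolds only to depth $\log n$, since each step strips a single bit of the binary length. The resolution is the standard unary-expansion trick: the map sending $\overline{x}$ to a word whose length equals the numerical value of $\overline{x}$ is itself definable by recursion on notations through repeated doubling via concatenation, so from {\tt Offset} one first builds a word of length $n$ and then recurses on it, incrementing a binary position counter and appending ${\tt Read}$ of that counter at each of the $n$ steps, using the convention ${\tt Read}(\overline{v}) = \varepsilon$ to stay within range. Establishing that this counting-and-reading recursion is expressible by the pointer schemes is the crux; once $W$ is in hand, every remaining step is the routine threading of $\overline{w}$ through, or the routine translation of, the matching recursion schemes.
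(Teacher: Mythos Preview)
Your proposal is correct and follows essentially the same two-step route as the paper: show {\tt Offset} and {\tt Read} are classically primitive recursive in $\overline{w}$ for one inclusion, and reconstruct $\overline{w}$ from {\tt Offset} and {\tt Read} inside the pointer algebra for the other. You have in fact gone further than the paper by spelling out the one nontrivial point the paper leaves implicit, namely the unary-expansion trick needed so that a recursion of depth $n$ (rather than $\log n$) becomes available for the reconstruction $W$.
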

\begin{proof}
It is already well known that primitive recursive functions on notations are the classical primitive recursive functions, and recursive functions on notations are the classical recursive functions. Now, for one direction, it suffices to express the {\tt Read} and {\tt Offset} basic pointer functions as primitive recursive functions on the computation input. For the other direction, it suffices to reconstruct with pointer primitive recursion the computation input from the {\tt Read} and {\tt Offset} basic pointer functions. 
\end{proof}

\section{Pointer Safe Recursion}\label{sec:safe}

We recall the tiering discipline of Bellantoni and Cook~\cite{DBLP:journals/cc/BellantoniC92}: functions arguments are divided into two tiers, \emph{normal} arguments and \emph{safe} arguments. Notation-wise, both tiers are separated by a semicolon symbol in a block of arguments, the normal arguments being on the left, and the safe arguments on the right. We simply apply this tiering discipline to our pointer recursion framework. 

\begin{paragraph}{Basic Pointer Safe Functions.}

Basic pointer safe functions are the basic pointer functions of the previous section, all their arguments being considered safe.
\end{paragraph}
\begin{paragraph}{Safe Composition.}

Safe composition is somewhat similar to the previous composition scheme, with a tiering discipline, ensuring that safe arguments cannot be moved to a normal position in a function call. The reverse however is allowed. 
$$f({\bf x};{\bf y}) = g(h_1({\bf x};) , \cdots, h_m({\bf x};);  h_{m+1}({\bf x};{\bf y}), \cdots , h_{m+n}({\bf x};{\bf y})).$$

Calls to functions $h_{m+i}$, where safe arguments are used, are placed in safe position in the argument block of $g$. A special case of safe composition is 
$f(\overline{x};\overline{y}) = g(;\overline{x},\overline{y})$, where a normal argument $\overline{x}$ is used in safe position in a call. Hence, we liberally use normal arguments in safe position, when necessary. 
\end{paragraph}
\begin{paragraph}{Safe Recursion.}
The recursion argument is normal. The recursive call is placed in safe position, a feature that prevents nesting recursive calls exponentially. 
\begin{eqnarray*}
f(\varepsilon, {\bf y}; {\bf z}) &=& h({\bf y}; {\bf z})\\
f(a.\overline{x}, {\bf y}; {\bf z}) &=& g_a(\overline{x}, {\bf y}; f(\overline{x}, {\bf y}; {\bf z}), {\bf z}) .
\end{eqnarray*}

\end{paragraph}

Let now $SR_{not}^{point}$ be  the closure of the basic pointer safe functions under safe composition and safe recursion. 
\begin{theorem}\label{theo:safe}
$SR_{not}^{point}$ is the class $DTIME(polylog)$ of functions computable in poly-logarithmic time. 
\end{theorem}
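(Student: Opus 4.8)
The plan is to prove the two inclusions $SR_{not}^{point} \subseteq DTIME(\mathsf{polylog})$ and $DTIME(\mathsf{polylog}) \subseteq SR_{not}^{point}$ separately, mirroring the classical Bellantoni--Cook argument but reinterpreted through the pointer framework, where the key quantity controlling complexity is the \emph{length} of the boolean words (pointers) being manipulated rather than the computation input itself.

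\textbf{Soundness ($SR_{not}^{point} \subseteq DTIME(\mathsf{polylog})$).} First I would establish, by induction on the structure of a function $f \in SR_{not}^{point}$, a polynomial bound on the output length of the form $|f({\bf x}; {\bf y})| \le p(\max_i |x_i|) + \max_j |y_j|$, exactly the kind of additive separation between normal and safe positions that the safe recursion discipline was designed to enforce. This is the tiering lemma: composition preserves it because safe arguments cannot migrate to normal position, and safe recursion preserves it because the recursive call sits in a safe slot, so its contribution is merely additive across the $|x|$-many iteration steps, never multiplicative. Given this length bound, I would then argue that on computation input $\overline{w}$ of length $n$, the function is invoked with {\tt Offset} of length $O(\log n)$; since all intermediate pointer values stay polynomial in $O(\log n)$, hence $O(\mathsf{polylog}\, n)$ in bit-length, and since the number of recursion unfoldings is bounded by the length of the normal recursion argument (again $O(\mathsf{polylog}\, n)$), a RATM can evaluate $f$ by a direct bottom-up simulation in polylogarithmic time. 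The unit-cost {\tt Read} maps exactly onto the RATM's Read state, and each basic function ({\tt hd}, {\tt tl}, ${\tt s}_a$, projections) costs time polynomial in the pointer length, i.e. $\mathsf{polylog}$.

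\textbf{Completeness ($DTIME(\mathsf{polylog}) \subseteq SR_{not}^{point}$).} For the converse I would take a RATM $M$ running in time $t(n) = O(\log^k n)$ and encode one step of its transition function as a basic-function-definable operation on a configuration, where a configuration (head positions, pointer-tape contents, worktape window, state) is coded as a boolean word. Reading the input is available directly as the {\tt Read} construct, and the length of the input, needed to initialize the pointer tape and detect out-of-range reads, is available as {\tt Offset}. The heart of the construction is to iterate the single-step transition $t(n)$ times using safe recursion: the step count is derived from {\tt Offset} (of length $O(\log n)$) by iterating a polylog-length-producing function so as to obtain a normal recursion argument of length $\Theta(\log^k n)$, and the running configuration is carried in a safe position. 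Because a single RATM step touches only a logarithmic-size window, each step is expressible by composition of the basic pointer functions, and safe recursion over the $\Theta(\log^k n)$-length driver word performs the required number of iterations while keeping configuration size under control.

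\textbf{Main obstacle.} The delicate point, and where I expect the real work to lie, is the interface between the additive tiering bound and the need to produce a \emph{normal} recursion argument whose length is a genuine polynomial in $\log n$ (rather than merely $O(\log n)$), so that safe recursion can drive exactly $\mathsf{polylog}$-many iterations; obtaining such a longer word in normal position from the $O(\log n)$-length {\tt Offset} without violating the safe/normal discipline requires a careful nesting of safe recursions, each squaring or multiplying the available length, and one must check that these auxiliary length-manipulation functions themselves stay within $SR_{not}^{point}$. Dually, on the soundness side the subtlety is verifying that the additive length invariant is genuinely preserved by safe composition in the presence of the special case where normal arguments are reused in safe position, since that reuse must not secretly reintroduce multiplicative blow-up. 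Everything else is a routine, if tedious, configuration-encoding simulation.
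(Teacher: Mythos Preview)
Your proposal is correct and follows essentially the same approach as the paper: the paper's proof is in fact only a short paragraph deferring entirely to the classical Bellantoni--Cook argument, noting that the initial recursion datum is now the {\tt Offset} (of length $\lceil\log(n+1)\rceil$) rather than the input itself, that input bits are accessed via {\tt Read}, and that the RATM is the appropriate machine model. Your more detailed outline---the additive tiering bound $|f({\bf x};{\bf y})|\le p(\max_i|x_i|)+\max_j|y_j|$ for soundness, and the smash-style nesting of safe recursions to manufacture a $\Theta(\log^k n)$-length normal driver for completeness---is exactly the Bellantoni--Cook machinery the paper is invoking, and the ``main obstacle'' you flag is the standard one resolved by the usual length-multiplying functions (concatenation by safe recursion, then iterated to get polynomial length), which do lie in $SR_{not}^{point}$.
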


\begin{proof}
The proof is essentially the same as for the classical result by Bellantoni and Cook~\cite{DBLP:journals/cc/BellantoniC92}. Here however, it is crucial to use the RATM as computation model. Simulating a polylogtime RATM with safe recursion on pointers is very similar to simulating a polytime TM with safe recursion - instead of explicitly using the machine input as recursion data, we use the size of the input as recursion data, and access the input values via the {\tt Read} construct, exactly as is done by the RATM model.  The other direction is also similar: the tiering discipline of the safe recursion on pointers enforces a polylog bound on the size of the strings (since the initial recursion data - the {\tt Offset} - has size logarithmic in the size $n$ of the computation input), and thus a polylog bound on the computation time. 
\end{proof}

\section{Tropical Tiering}\label{sec:tropics}

We present here another, stricter tiering discipline, that we call \emph{tropical Tiering}. The adjective "tropical" refers to the fact that this  tiering induces a polynomial interpretation in the tropical ring of polynomials. This tiering discipline takes some inspiration from Hofmann's work on non-size increasing types~\cite{DBLP:journals/iandc/Hofmann03}, and pure pointer programs~\cite{DBLP:journals/tocl/HofmannS10}. The idea however  is to use here different tools than  Hofmann's  to achieve a similar goal of bounding the size of the function outputs. We provide here a non-size increasing discipline via the use of tiering, and  use it in the setting of pointer recursion to capture not only pure pointer programs (Hoffman's class), but rather pointer programs with pointer arithmetics, which is in essence the whole class Logspace.

\begin{paragraph}{Basic Pointer Functions.}

We add the following numerical successor basic function. Denote by $E: \N \rightarrow \{ {\bf 0},{\bf 1} \}^*$ the usual binary encoding of integers, and $D: \{ {\bf 0},{\bf 1} \}^* \rightarrow \N$ the decoding of binary strings to integers. Then, 
$${\tt s}(\overline{x}) = E(D(\overline{x})+1)$$

denotes the numerical successor on binary encodings, and, by convention, $\varepsilon$ is the binary encoding of the integer $0$.

\end{paragraph}
\begin{paragraph}{Primitive Recursion on Values.}

Primitive recursion on values is the usual primitive recursion, encoded into binary strings:

\begin{eqnarray*}
f(\varepsilon, {\bf y}) &=& h({\bf y})\\
f({\tt s}(\overline{x}),  {\bf y}) &=&  g(\overline{x}, f(\overline{x},  {\bf y}), {\bf y}).
\end{eqnarray*}
\end{paragraph}

\subsection{Tropical Tier}

As usual, tiering consists in assigning function variables to different classes, called tiers. In our setting, these tiers are identified by a numerical value, called \emph{tropical tier}, or, shortly, \emph{tropic}. 
The purpose of our tropical tiers is to enforce a strict control on the increase of the size of the binary strings during computation. Tropics take values in $\Z \cup \{ -\infty \}$. The tropic of the $i^{th}$ variable of a function $f$ is denoted $T_i(f)$. The intended meaning of the tropics is to provide an upper bound on the linear growth of the function output size with respect to the corresponding input size, as per Proposition~\ref{prop:trop-inter}.
Tropics are inductively defined as follows.

\begin{enumerate}
\item Basic pointer functions: 
$$
\begin{array}{rclrclrcl}
T_{j \neq i}({\tt Pr}_i^n) &=& - \infty &\;\;\; T_1({\tt hd}) &=& -\infty &\;\;\; T_1({\tt Read}) &=& -\infty\\
T_1({\tt tl}) &=& -1 &&&&&&\\
T_{ i}({\tt Pr}_i^n) &=& 0 &&&&&&\\
T_1({\tt s}_0) &=& 1 & T_1({\tt s}_1) &=& 1 & T_1({\tt s}) &=& 1
\end{array}
$$

\item Composition: $$T_t(f) = \max_i \{ T_i (g) + T_t(h_i) \}.$$

\item Primitive recursion on notations. 
Two cases arise:
\begin{itemize}
\item 
$T_2(g_0)\le 0$ and $T_2(g_1) \le 0$. In that case,  we set  
\begin{enumerate}
\item $T_1(f) = \max \{T_1(g_0), T_1(g_1), T_2(g_0), T_2(g_1)\}$,  and,
\item for all $t\ge 1$,\\ $T_t(f) = \max \{T_{t+1}(g_0),T_{t+1}(g_1), T_{t-1}(h), T_2(g_0), T_2(g_1)\}$.
\end{enumerate}
\item  the previous case above does not hold, $T_2(g_0)\le 1$, and  $T_2(g_1) \le 1$. In that case, we also require that  $T_1(g_0) \le 0$,  $T_1(g_1) \le 0$, and, for all $t \ge 2$, $T_t(g_0)=T_t(g_1)=T_{t-2}(h) = -\infty$. Then,  we set  $T_1(f) = \max \{T _1(g_0), T_1(g_1), T_2(g_0)-1, T_2(g_1)-1, c_h \}$, where $c_h$ is a constant for $h$ given in Proposition~\ref{prop:trop-inter} below, and, for $t\ge 1$, $T_t(f)=-\infty$.
\end{itemize}
Other cases than the two above do not enjoy tropical tiering.

\item Primitive recursion on values. 
Only one case arises:
\begin{itemize}
\item  $T_2(g)\le 0$ .  In that case,  we set  
\begin{enumerate}
\item$T_1(f) = \max \{T_1(g), T_2(g) \}$,  and,
\item for all $t\ge 1$, $T_t(f) = \max \{T_{t+1}(g), T_{t-1}(h), T_2(g)\}$.
\end{enumerate}
\end{itemize}
Again, other cases than the one above do not enjoy tropical tiering.
\end{enumerate}

Furthermore, when using tropical tiering, we use mutual recursion schemes. For ${\bf f} = (f_1,\cdots,f_n)$, mutual primitive recursion (on values) is classically defined as follows,
\begin{eqnarray*}
{\bf f}(\varepsilon, {\bf y}) &=& {\bf h}({\bf y})\\
{\bf f}({\tt s}(\overline{x}),  {\bf y}) &=& \left \{ \begin{array}{lr} {\bf g}(\overline{x}, {\bf f}(\overline{x},   {\bf y}), {\bf y}) & \mbox{ if } \forall i\; (f_i(\overline{x},  {\bf y}) \neq \perp)\\ \perp & \mbox{ otherwise.} \end{array} \right.
\end{eqnarray*}
and similarly for mutual primitive recursion on notations. Tropical tiering is then extended to mutual primitive recursion in a straightforward manner. 

We define the set of L-primitive pointer recursive functions as the closure of the basic pointer functions of Sections 2 and 4 under composition, (mutual) primitive recursion on notations and (mutual) primitive recursion on values, with tropical tiering.

\subsection{Tropical Interpretation}

Tropical tiering induces a non-size increasing discipline. More formally,

\begin{proposition}\label{prop:trop-inter}
The tropical tiering of a L-primitive recursive function $f$ induces a polynomial interpretation of $f$ on the tropical ring of polynomials, as follows.

For any L-primitve recursive function $f$ with $n$ arguments, there exists a constant $c_f \ge 0$ such that
$$| f(\overline{x_1},\cdots,\overline{x_n}) | \le \max_t \{ T_t(f) + |\overline{x_t}|, c_f \}.$$
\end{proposition}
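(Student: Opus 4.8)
The plan is to prove the bound by induction on the structure of the derivation that establishes $f$ is an L-primitive pointer recursive function, mirroring the inductive definition of the tropics $T_i(f)$. The statement asserts that the tropics act as a correct polynomial interpretation in the tropical (max-plus) semiring: the size of the output is bounded by $\max_t\{T_t(f)+|\overline{x_t}|\}$, up to an additive constant $c_f$. The induction will have one case for each clause in the definition of tropics: the basic pointer functions (base case), composition, primitive recursion on notations (two sub-cases), and primitive recursion on values. The mutual recursion schemes will be handled as a straightforward extension once the single-function cases are settled.

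First I would verify the base case by direct inspection. For each basic function the claimed bound follows immediately: for instance $|{\tt tl}(\overline{x})|\le|\overline{x}|-1 = T_1({\tt tl})+|\overline{x}|$, while $|{\tt s}_a(\overline{x})|=|\overline{x}|+1=T_1({\tt s}_a)+|\overline{x}|$, and ${\tt hd},{\tt Read}$ output a single bit or $\varepsilon$, so their output size is bounded by a constant, consistent with tropic $-\infty$ on all arguments. The numerical successor ${\tt s}$ increases length by at most one, matching $T_1({\tt s})=1$. For composition, I would chain the inductive hypotheses: substituting the bound $|h_i({\bf x})|\le\max_t\{T_t(h_i)+|\overline{x_t}|,c_{h_i}\}$ into the bound for $g$ and using that tropical addition distributes over the max exactly reproduces the rule $T_t(f)=\max_i\{T_i(g)+T_t(h_i)\}$, with $c_f$ accumulating from the constants $c_g$ and $c_{h_i}$. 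This is the routine algebraic heart of why the max-plus interpretation is the right one.

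The main obstacle is the recursion cases, and in particular the primitive recursion on notations when $T_2(g_a)\le 0$. Here the recursive call appears as the second argument of $g_a$, and I must control the accumulated growth over the $|\overline{x}|$-many unfoldings of the recursion. The condition $T_2(g_a)\le 0$ is precisely the non-size-increasing guarantee: because the tropic on the recursion-output slot is at most $0$, each step cannot amplify the size contributed by the recursive call, so iterating does not cause the blow-up that unrestricted recursion would produce. I would argue by an inner induction on the length of the recursion argument, showing that at each stage the size stays bounded by $\max\{T_1(g_0),T_1(g_1),T_2(g_0),T_2(g_1)\}+|\overline{x}|$ together with the contributions from ${\bf y}$ via the $t\ge1$ formula; the delicate point is checking that the $-\infty$-valued and $0$-valued tropics propagate so that the recursive-call term never contributes more than a constant per step, keeping the total linear rather than exponential.

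Finally I would treat the second sub-case of recursion on notations, where $T_2(g_a)\le 1$ is permitted but compensated by the strong side-conditions ($T_1(g_a)\le 0$ and all higher tropics $-\infty$); here the constant $c_h$ absorbs the bounded per-step growth, which is why the rule sets $T_t(f)=-\infty$ for all $t\ge 1$ and folds everything into $T_1(f)$ and the additive constant. The recursion-on-values case is analogous to the first notation sub-case with a single $g$, so no new idea is required. The extension to mutual recursion follows by running the same inner induction simultaneously over the tuple ${\bf f}=(f_1,\dots,f_n)$, taking componentwise maxima, since the tropics are extended componentwise. I expect the purely arithmetic manipulations of max and $+$ to be mechanical; the real care lies in formalizing the inner induction for the recursion schemes and confirming that the side-conditions on the tropics of $g_a$ (respectively $g$) are exactly what guarantees linearity of the iterated bound.
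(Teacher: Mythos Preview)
Your proposal is correct and follows essentially the same approach as the paper: structural induction on the definition of $f$, with the base cases handled by direct inspection, composition by chaining the max--plus bounds, and each recursion scheme by an inner induction on the recursion argument, the key observation being that $T_2(g_a)\le 0$ (respectively $\le 1$ with the compensating side-conditions) blocks iterative blow-up. The paper's proof organizes the inner induction for recursion by a case split on which term of $\max\{|\overline{x}|+T_1(g_a),\,|f(\overline{x},{\bf y})|+T_2(g_a),\,|\overline{y_t}|+T_{t+2}(g_a),\,c_{g_a}\}$ realizes the maximum, but this is exactly the unpacking of what you describe.
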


\begin{proof}
The proof is given for non-mutual recursion schemes, by  induction on the definition tree. Mutual recursion schemes follow the same pattern.
\begin{enumerate}
\item For basic pointer functions, the result holds immediately. 
\item Let $f$ be defined by composition, and assume that the result holds for  the functions $g$, $h_1,\cdots, h_n$ . Then, for any $i=1, \cdots, n$,  $| h_i({\bf x}) | \le \max_t \{ T_t(h_i) + |\overline{x_t}|, c_{h_i} \}$. Moreover, there exists by induction $c_g$ such that  $| g(h_1({\bf x}), \cdots, h_n({\bf x}))  | \le \max_i \{ T_i(g) + | h_i({\bf x}) |, c_g \}$. Composing the inequalities above yields  $| g(h_1({\bf x}), \cdots, h_n({\bf x}))  | \le \max_i \{ T_i(g) + \max_t \{  T_t(h_i)  + |\overline{x_i}|, c_{h_i} \}, c_g\} =  \max_t \{ T_t(f) + |\overline{x_t}|, \max_i\{ c_{f_i}, c_g \} \}$.

\item Let $f$ be defined by primitive recursion on notations, and assume that the first case holds. Let $f(a.\overline{x}, {\bf y}) = g_a(\overline{x}, f(\overline{x}, {\bf y}),  {\bf y})$, for $a\in\{0,1\}$, and assume $T_2(g_0) \le 0$  and $T_2(g_1) \le 0$.
We apply the tropical interpretation on $g$, and we show by induction the result for $f$ on the length of $a.\overline{x}$. 
\begin{enumerate}
\item If $\max_{\overline{x}, f(\overline{x},{\bf y}), t}  \{ |\overline{x}| + T_1(g_a), |f(\overline{x},{\bf y})| +T_2(g_a), |\overline{y_t}| + T_{t+2}(g_a), c_{g_a} \} = |\overline{x}|+ T_1(g_a)$:  $| f(a.\overline{x}, {\bf y})| \le |\overline{x}| + T_1(g_a) \le  |\overline{x}| +T_1(f)$, and the result holds.
\item If $\max_{\overline{x}, f(\overline{x},{\bf y}), t}  \{ |\overline{x}| + T_1(g_a), |f(\overline{x},{\bf y})| +T_2(g_a), |\overline{y_t}| + T_{t+2}(g_a), c_{g_a} \}  = |f(\overline{x},{\bf y})|+T_2(g_a)$: Since $T_2(g_a)\le0$,  $| f(a.\overline{x}, {\bf y})| \le  | f(\overline{x}, {\bf y})| $, and the induction hypothesis applies.
\item If $\max_{\overline{x}, f(\overline{x},{\bf y}), t}  \{ |\overline{x}| + T_1(g_a), |f(\overline{x},{\bf y})| +T_2(g_a), |\overline{y_t}| + T_{t+2}(g_a), c_{g_a} \} = |\overline{y_t}| + T_{t+2}(g_a)$ for some $t$: the result applies immediately by structural induction on $g_a$. 
\item If $\max_{\overline{x}, f(\overline{x},{\bf y}), t}  \{ |\overline{x}| + T_1(g_a), |f(\overline{x},{\bf y})| +T_2(g_a), |\overline{y_t}| + T_{t+2}(g_a), c_{g_a} \} = c_{g_a}$, the result holds immediately.
\item The base case $f(\epsilon, {\bf y})$ is immediate. 
\end{enumerate}

\item Let $f$ be defined by primitive recursion on notations, and assume now that the second of the two corresponding cases holds. Let $f(a.\overline{x},{\bf y}) = g_a(\overline{x}, f(\overline{x},{\bf y}),{\bf y})$, for $a\in\{0,1\}$.
Since the first case does not hold, $T_2(g_0)= 1$ or $T_2(g_1) = 1$: assume that $T_2(g_0)= 1$ (the other case being symmetric). Assume also that, $T_1(g_0) \le 0$ and $T_1(g_1) \le 0$, and for all $t\ge 2$, $T_t(g_0)=T_t(g_1)=T_{t-2}(h)=-\infty$. Then, we set $T_1(f) = \max \{0, c_h\}$. We apply the tropical interpretation on $g$, and prove by induction on the length of $a.\overline{x}$ that $|f(a.\overline{x},{\bf y})| \le |a.\overline{x}| + \max \{ c_{g_1}, c_{g_2}, c_h \}$.
\begin{enumerate}
\item If $\max_{t\ge 2} \{ |\overline{x}| + T_1(g_a), |f(\overline{x},{\bf y})| + T_2(g_a), |\overline{y_t}|+ T_t(g_a),  c_{g_a} \} = |\overline{x}| + T_1(g_a)$. Since $T_1(g_a)\le0$ and $T_1(f) \ge 0$, $| f(a.\overline{x}, {\bf y})| \le |\overline{x}| \le T_1(f) + |\overline{x}|$, and the result holds.
\item If $\max_{t\ge 2} \{ |\overline{x}| + T_1(g_a), |f(\overline{x},{\bf y})| + T_2(g_a),  |\overline{y_t}|+ T_t(g_a), c_{g_a} \}  = |f(\overline{x})| + T_2(g_a)$. Since $T_2(g_a)\le1$,  $| f(a.\overline{x})| \le 1 +  | f(\overline{x})| $, and the induction hypothesis allows to conclude.
\item If $\max_{t\ge 2} \{ |\overline{x}| + T_1(g_a), |f(\overline{x},{\bf y})| + T_2(g_a),  |\overline{y_t}|+ T_t(g_a),c_{g_a} \} = c_{g_a}$, the result holds immediately.
\item The case $\max_{t\ge 2} \{ |\overline{x}| + T_1(g_a), |f(\overline{x},{\bf y})| + T_2(g_a),  |\overline{y_t}|+ T_t(g_a),c_{g_a} \} = |\overline{y_t}|+ T_t(g_a)$ is impossible since $T_t(g_a) = -\infty$ for $t\ge2$.
\item The base case $f(\epsilon,{\bf y})$ is immediate. 
\end{enumerate}

\item Let now assume $f$ is define by primitive recursion on values. Then, the only possible case is similar to the first case of primitive recursion on notation. 
\end{enumerate}

The proof by induction above emphasizes the critical difference between recursion on notation and recursion on values: the second case of the safe recursion on notations correspond to the linear, non-size increasing scanning of the input, as in, for instance, 
$$ f(a.\overline{x}) = s_a(f(\overline{x})).$$ 
This, of course, is only possible in recursion on notation, where the height of the recursive calls stack is precisely the length of the scanned input. Recursion on values fails to perform this linear scanning, since, for a given recursive argument $\overline{x}$, the number of recursive calls is then exponential in $|\overline{x}|$. 
\end{proof}

Proposition~\ref{prop:trop-inter} proves that the tropical tiering of a function yields actually a tropical polynomial interpretation for the function symbols: The right hand side of the Lemma inequality is indeed a tropical interpretation. 
Moreover, this interpretation is directly given by the syntax.

Furthermore, the proof also highlights why we use mutual recursion schemes instead of more simple, non-mutual ones: non-size increasing discipline forbids the use of multiplicative constants in the size of the strings. So, in order to capture a computational space of size $k.\log(n)$, we need to use $k$ binary strings of length $\log(n)$, defined by mutual recursion. 


\begin{corollary}\label{cor:trop-inter}
L-primitive pointer recursive functions are computable in logarithmic space.
\end{corollary}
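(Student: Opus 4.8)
The plan is to bound the space used by any $L$-primitive pointer recursive function $f$ by exploiting the tropical interpretation established in Proposition~\ref{prop:trop-inter}. First I would observe that the sole computation input is the {\tt Offset}, whose size is $|{\tt Offset}| = \lceil \log_2(n+1) \rceil = \Oh(\log n)$, where $n = |\overline{w}|$ is the size of the actual computation input. Proposition~\ref{prop:trop-inter} gives, for each $f$, a constant $c_f$ with $|f(\overline{x_1},\cdots,\overline{x_n})| \le \max_t \{ T_t(f) + |\overline{x_t}|, c_f\}$. Since the tropics $T_t(f)$ and the constants $c_f$ are fixed once the (finite) definition tree of $f$ is fixed, and since every argument ultimately derives from the {\tt Offset} of logarithmic size, this bounds the size of every intermediate value produced during the evaluation of $f$ by $\Oh(\log n)$.

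The key steps, in order, are as follows. First I would argue that it suffices to exhibit a logspace evaluation strategy for the definition tree of $f$, proceeding by induction on the structure of that tree. The base cases are the basic pointer functions: {\tt hd}, {\tt tl}, {\tt s}$_0$, {\tt s}$_1$, {\tt s}, the projections, {\tt Offset}, and {\tt Read}; each reads or writes strings of size $\Oh(\log n)$ and is trivially computable in logarithmic space (the {\tt Read} construct corresponds exactly to the unit-cost pointer read of the RATM, which is a legitimate logspace operation when the pointer has logarithmic size). For composition and for the various recursion schemes, I would show that if the subfunctions are logspace computable and all intermediate values are bounded in size by $\Oh(\log n)$, then so is $f$.

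The main obstacle, and the crux of the argument, is the recursion case: a naive recursive evaluation would store the entire stack of recursive calls, whose depth can be polynomial in $n$ (for recursion on values the recursion argument $\overline{x}$ has value up to roughly $n$, giving a stack of polynomial height), so the explicit call stack does not fit in logarithmic space. The resolution is that, because the recursive call sits in an accumulator position whose intermediate value is bounded by $\Oh(\log n)$ thanks to tropical tiering, the recursion can be unrolled iteratively rather than stored as a stack: I would evaluate $f(\overline{x},{\bf y})$ bottom-up, maintaining only the current recursion index (itself of size $\Oh(\log n)$) and the current accumulated value (of size $\Oh(\log n)$ by Proposition~\ref{prop:trop-inter}), recomputing each step of $g$ or $g_a$ in place. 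This tail-recursive reorganization is exactly what the non-size-increasing discipline makes possible: since no intermediate string ever exceeds logarithmic length, the total working space is $\Oh(\log n)$.

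Finally I would handle mutual recursion, where ${\bf f} = (f_1,\cdots,f_n)$ has fixed width: the iterative unrolling is carried out componentwise, storing one logarithmic-size accumulator per component, which is still a constant number of logarithmic-size strings and hence $\Oh(\log n)$ space overall. Composing the logspace bounds across the finitely many nodes of the definition tree — using the standard fact that logspace is closed under composition, with the additional guarantee that intermediate outputs stay logarithmic so they need not be stored on separate tapes but can be recomputed on demand — yields the claimed logarithmic space bound, completing the proof.
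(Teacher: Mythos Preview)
Your proposal is correct and follows essentially the same approach as the paper: invoke Proposition~\ref{prop:trop-inter} to bound all intermediate strings by $\Oh(\log n)$, proceed by structural induction, and handle the only nontrivial case---recursion---by replacing the call stack with an iterative bottom-up loop that maintains only the current recursion index and the current accumulated value(s), all of logarithmic size. The paper's proof is terser but makes exactly the same moves; your additional remarks on mutual recursion and on composition are compatible refinements rather than a different strategy.
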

\begin{proof}

Proposition~\ref{prop:trop-inter} ensures that the size of all binary strings is logarithmically bounded. A structural induction on the definition of $f$ yields the result. The only critical case is that of a recursive construct. When evaluating a recursive construct, one needs simply to store all non-recursive arguments (the $\overline{y_i}$'s) in a shared memory, keep a shared counter for keeping track of the recursive argument $\overline{x}$, and use a simple ${\tt while}$ loop to compute successively all intermediate recursive calls leading to $f(\overline{x},{\bf y})$. All these shared values have logarithmic  size. The induction hypothesis ensures then that, at each step in the ${\tt while}$ loop, all computations take logarithmic space. The two other cases, composition and basic functions, are straightforward.
\end{proof}

In the following section, we prove the converse: logarithmic space functions can be computed by a sub-algebra of the L-primitive pointer recursive functions.

\subsection{Tropical Recursion}\label{subsec:trop}

In this section we restrict our tropical tiering approach to only four possible tier values: $1$, $0$, $-1$ and $-\infty$. The rules for tiering are adapted accordingly. More importantly, the use of only four tier values allows us to denote these tropics directly in the syntax, in an approach similar to that of Bellantoni and Cook. Let us take as separator symbol the following $\tv$ symbol, with leftmost variables having the highest tier. As with safe recursive functions, we allow the use of a  high tier variable in a low tier position, as in, for instance,  
$$f(\overline{x} \tv \overline{y} \tv \overline{z} \tv \overline{t}  ) = g ( \tv \overline{y} \tv \overline{x}, \overline{z}\tv \overline{t} ).$$
Our tropical recursive functions are then as follows. 

\begin{paragraph}{Basic tropical pointer functions.}
Basic tropical pointer functions are the following.
$$
\begin{array}{rclrcl}
{\tt hd}(\tv \tv \tv a.\overline{x}) & = & a & {\tt tl}(\tv \tv a.\overline{x} \tv  ) &  = & \overline{x}\\
{\tt hd}(\tv \tv  \tv  \varepsilon) & = & \varepsilon & {\tt tl}(\tv\tv \varepsilon \tv ) &  = & \varepsilon\\
{\tt s}_0(\overline{x}\tv\tv\tv) &=& 0.\overline{x} & {\tt s}_1(\overline{x}\tv\tv\tv ) &=& 1.\overline{x}\\
{\tt s}(\overline{x}\tv\tv\tv) &=& E(D(\overline{x})+1) \; \; \; \; \; \;\; \; \;\; \; \;\; \; \; & {\tt Read} ( \tv \tv \tv \overline{x}) &=& a \in \{ 0, 1 \}\\
\multicolumn{4}{r}{{\tt Pr}_i^n(\tv \overline{x_i} \tv \tv \overline{x_1}, \cdots, \overline{x_{i-1}}, \overline{x_{i+1}}, \cdots, \overline{x_n})} &=& \overline{x_i}\\
\end{array}$$
\end{paragraph}
\begin{paragraph}{Tropical composition.}
Define ${\bf t} = {\bf t}_1,{\bf t}_2,{\bf t}_3,{\bf t}_4$.
The tropical composition scheme is then
\begin{eqnarray*}
f({\bf x}\tv {\bf y} \tv {\bf z} \tv {\bf t} ) &=&g(h_1(\tv {\bf x}\tv {\bf y} \tv {\bf t} ) , \cdots, h_a(\tv {\bf x}\tv {\bf y} \tv {\bf t} ) \tv \\
&&h_{a+1}({\bf x}\tv {\bf y} \tv {\bf z}\tv {\bf t} ) , \cdots, h_b({\bf x}\tv {\bf y} \tv {\bf z}\tv {\bf t} ) \tv \\
&&h_{b+1}( {\bf y} \tv {\bf z} \tv \tv {\bf t}) , \cdots, h_c( {\bf y} \tv {\bf z} \tv \tv {\bf t})  \tv \\
&&h_{c+1}({\bf t}_1\tv {\bf t}_2 \tv {\bf t}_3 \tv {\bf t}_4) , \cdots, h_d({\bf t}_1\tv {\bf t}_2 \tv {\bf t}_3 \tv {\bf t}_4) ) \\
\end{eqnarray*}
\end{paragraph}
\begin{paragraph}{Tropical Recursion on Notations - case 1.}
\begin{eqnarray*}
f(  {\bf x}\tv \varepsilon,{\bf y} \tv {\bf z} \tv {\bf t}) &=& h({\bf x}\tv {\bf y} \tv {\bf z}\tv {\bf t})\\
f(  {\bf x}\tv {\tt s}_a(\overline{r}\tv\tv\tv),{\bf y} \tv {\bf z} \tv {\bf t}) &=& g_a (   {\bf x} \tv \overline{r}, f( {\bf x}\tv \overline{r},{\bf y} \tv {\bf z} \tv {\bf t}), {\bf y} \tv {\bf z}\tv {\bf t} )\\
\end{eqnarray*}
\end{paragraph}
\begin{paragraph}{Tropical Recursion on Notations - case 2.} (Linear scanning)
\begin{eqnarray*}
f(\tv \varepsilon \tv \tv {\bf t} ) &=& \varepsilon\\ 
f(\tv {\tt s}_a(\overline{r}\tv\tv\tv) \tv \tv {\bf t}) &=& g_a ( f(\tv \overline{r}\tv \tv {\bf t} ) \tv \overline{r} \tv \tv {\bf t} )
\end{eqnarray*}
\end{paragraph}
\begin{paragraph}{Tropical Recursion on Values.}
\begin{eqnarray*}
f(  {\bf x}\tv \varepsilon,{\bf y} \tv {\bf z} \tv {\bf t}) &=& h({\bf x}\tv {\bf y} \tv {\bf z}\tv {\bf t})\\
f(  {\bf x}\tv {\tt s}(\overline{r}\tv\tv\tv),{\bf y} \tv {\bf z} \tv {\bf t}) &=& g (   {\bf x} \tv \overline{r}, f( {\bf x}\tv \overline{r},{\bf y} \tv {\bf z} \tv {\bf t}), {\bf y} \tv {\bf z}\tv {\bf t} )\\
\end{eqnarray*}
\end{paragraph}

As above, we use the mutual version of these  recursion schemes, with the same tiering discipline. Note that, unlike previous characterizations of sub-polynomial complexity classes~\cite{DBLP:journals/cc/Bloch94,DBLP:journals/iandc/BonfanteKMO16,DBLP:journals/cc/Kuroda04}, our tropical composition and recursion schemes are only syntactical refinements of the usual composition and primitive recursion schemes - removing the syntactical sugar yields indeed the classical schemes. 

\begin{definition}{L-tropical functions}\\
The class of L-tropical functions is the closure of our basic tropical pointer functions, under tropical composition, tropical mutual recursion on notations, and tropical mutual recursion on values.
\end{definition}

The restriction of only four tier values suffices to capture the computational power of RATMs. More precisely, 

\begin{theorem}\label{thm:logspace}
The class of L-tropical functions is the class of functions computable in logarithmic space, with logarithmic size output. 
\end{theorem}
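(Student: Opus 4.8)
The plan is to prove the two inclusions separately. The easier direction, that every L-tropical function is computable in logarithmic space with logarithmic size output, follows almost immediately from the machinery already established. Since the L-tropical functions form a sub-algebra of the L-primitive pointer recursive functions (the tropical syntax with four tier values $1,0,-1,-\infty$ being merely a syntactic presentation of tropical tiering), Corollary~\ref{cor:trop-inter} already guarantees logspace computability, and Proposition~\ref{prop:trop-inter} guarantees that all intermediate strings, in particular the output, have logarithmically bounded size. The only point to check is that the four-tier syntactic restriction does indeed yield a legal tropical tiering in the sense of Section~\ref{sec:tropics}, so that these earlier results apply verbatim; this is a routine inspection of the basic function tiers and of the composition and recursion schemes.

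The substantial direction is the converse: every function computable in logarithmic space with logarithmic size output is L-tropical. First I would fix a logspace RATM $M$ and describe its configuration: a constant number of logarithmic-size work tapes, a logarithmic-size pointer tape, and a state from a finite set. The key idea, already flagged in the remark following Proposition~\ref{prop:trop-inter}, is that a logspace computation uses $k.\log(n)$ cells of workspace, which we represent not as a single string (that would require a multiplicative constant, forbidden by non-size-increasingness) but as $k$ separate binary strings each of length $O(\log n)$, manipulated by \emph{mutual} recursion. Each such string, together with the pointer-tape contents and a finite encoding of the current state and head positions, constitutes one configuration. I would encode a single transition step of $M$ as an L-tropical function: reading a bit is the {\tt Read} construct, moving a head is {\tt hd}/{\tt tl}/{\tt s}$_a$, and incrementing the pointer for a random access is the numerical successor {\tt s}. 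Each of these operations is non-size increasing on the relevant argument, so it respects tropical tiering; the finite control (state updates, case distinctions on the scanned symbols) is handled by composition.

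Next I would iterate the one-step transition function. The number of steps of a logspace machine is polynomial in $n$, hence exponential in the length $\log n$ of the {\tt Offset}. This is exactly where \emph{recursion on values} is needed rather than recursion on notations: recursion on values over a counter of length $O(\log n)$ performs a number of recursive calls exponential in the length of its recursive argument, i.e. polynomially many in $n$, which is the right budget to run the machine to completion. I would drive the iteration by a recursion-on-values whose recursive argument is a logarithmic-size step counter, whose tropic on the configuration arguments is $0$ (the transition is non-size increasing, matching the requirement $T_2(g)\le 0$), and whose output is the final configuration, from which the machine output is read off. Finally, since the RATM halts in a polynomial number of steps and the counter must reach that bound, I would argue the counter can itself be produced within the algebra and that halting is detected correctly.

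\textbf{Main obstacle.} The hard part will be the bookkeeping that reconciles the non-size increasing discipline with the iteration of the transition function: I must ensure that the step counter, whose length is $O(\log n)$, can be both generated and counted down (or up) entirely within the tropical tiering constraints, and that threading the $k$ mutually-recursive configuration strings through the transition function never forces a forbidden tier assignment. In particular the transition function must genuinely satisfy $T_2(g)\le 0$ on the configuration argument so that recursion on values applies, which requires checking that no subcomputation inside a single step copies or extends a configuration string in a size-increasing way; concatenation-style operations must be confined to the linear-scanning case of recursion on notations. Verifying these tier constraints step by step, and assembling the polynomial-size counter without exceeding logarithmic size, is where the real care is required; the rest is a standard machine-simulation argument.
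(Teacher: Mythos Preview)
Your high-level strategy matches the paper's: the easy direction is exactly Corollary~\ref{cor:trop-inter} applied to the sub-algebra, and for the converse you simulate a logspace RATM by encoding a configuration as a tuple of $O(\log n)$-size strings, building a one-step transition function, and iterating it via recursion on values. That is precisely the paper's route.

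There is, however, a genuine gap in your iteration step. You write that a single recursion on values over a counter of length $O(\log n)$ yields ``polynomially many in $n$'' steps, and you flag ``assembling the polynomial-size counter without exceeding logarithmic size'' as the obstacle. But tropical tiering (Proposition~\ref{prop:trop-inter}) bounds every string by $|{\tt Offset}| + c = \log n + c$, not $k\log n$; hence a \emph{single} recursion on values gives at most $O(n)$ iterations, which is not enough for a machine running in time $n^k$ with $k>1$. There is no way to build a longer counter. The paper's fix is not a bigger counter but \emph{nesting}: it defines functions ${\bf Step}_1,\dots,{\bf Step}_k$, each a recursion on values over a $\log n$-size argument, where ${\bf Step}_i$ calls ${\bf Step}_{i-1}$ in its body. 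This $k$-fold nested loop yields $n^k$ applications of the transition map while every individual string stays at length $\log n + c$. Your plan should replace the single-counter idea with this nesting.

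A second point worth sharpening: the paper encodes each head position not as a number but as a string $0\cdots 0 1 0\cdots 0$ of the same length as the tape segment, and performs reads and in-place one-bit writes on the tape encodings via the linear-scanning case (case~2) of tropical recursion on notations. Your remark that ``concatenation-style operations must be confined to the linear-scanning case'' is on target, but the concrete mechanism---scan to the $1$, extract/replace that bit---is what makes the transition map respect $T_2(g)\le 0$. With these two refinements your plan coincides with the paper's proof.
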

\begin{proof}
L-tropical functions are L- primitive pointer recursive functions with tropics $1$, $0$, $-1$ and $-\infty$. Following Corollary~\ref{cor:trop-inter}, they are computable in logspace. The converse follows from the simulation of a logarithmic space RATM. The simulation works as follows.
\begin{paragraph}{Encoding the machine configurations.}
Assume the machine $M$ works in space $k\lceil \log(n+1) \rceil$. A configuration of $M$ is then encoded by $2k + 3$ binary strings of length less than $\lceil \log(n+1) \rceil$:
\begin{enumerate}
\item one string, of constant length, encodes the machine state,
\item one string, of length $\lceil \log(n+1) \rceil$, encodes the pointer tape,
\item one string, of length $\lceil \log(n+1) \rceil$, encodes the head of the pointer tape. It contains ${\bf 0}$ symbols everywhere,  but on the position of the head (where it contains a ${\bf 1}$).
\item $k$ strings, of length $\lceil \log(n+1) \rceil$, encode the content of the work tape, and
\item $k$ strings, of length $\lceil \log(n+1) \rceil$, encode the position of the work tape head, with (as for the pointer tape)  ${\bf 0}$ everywhere but on the position of the head.
\end{enumerate}
\end{paragraph}
\begin{paragraph}{Reading and Updating a configuration.}
Linear scanning of the recursive argument in tropical recursion, corresponding to case 2 of the definition of tropical recursion on notations, is used to read and to update the encoding of the configuration. In order to do so, one  defines L-tropical functions for
\begin{enumerate}
\item  encoding boolean values {\tt true} and {\tt false}, boolean connectives,  and {\tt if then else}  constructs,
\item scanning an input string until a {\bf 1} is found, and computing the corresponding prefix sequence,
\item computing left and right extractions of sub-strings of a string, for a given prefix,
\item replacing exactly one bit in a binary string, whose position is given by a prefix of the string. 
\end{enumerate}
With all these simple bricks, and especially with the in-place one-bit replacement, one is then able to read a configuration, and to update it, with L-tropical functions. None of these L-tropical functions uses recursion on values.
\end{paragraph}
\begin{paragraph}{Computing the Transition map of the Machine.}
Given the functions above, the transition map ${\bf Next}$ of the machine is then computed by a simple L-tropical function of width $(2k+3)$: For a recursive argument $\overline{s}$ of size $\lceil \log(n+1) \rceil$, ${\bf Next}(\tv \overline{s}, {\bf c} \tv \tv)$ computes the configuration reached from ${\bf c}$ in one transition step. 
\end{paragraph}
\begin{paragraph}{Simulating the RATM.}
The simulation of the RATM is then obtained by iterating its transition map a suitable number of times. The time upper bound is here obtained by nesting $k$ tropical recursive functions on values: on an input of size $\lceil \log(n+1) \rceil$, the unfolding of these recursive calls takes time $n^k$. At each recursive step, this function needs to apply the transition map. The transition map having width $(2k+3)$, we use here a mutual recursion scheme, of width $(2k+3)$. Again, for a recursive argument $\overline{s}$ of size $\lceil \log(n+1) \rceil$, we define 
\begin{eqnarray*}
{\bf Step}_1 ( \tv \varepsilon, \overline{s}, {\bf c}\tv\tv) &=&  {\bf c}\\
{\bf Step}_1 ( \tv {\tt s}(\overline{t}\tv \tv\tv), \overline{s},{\bf c} \tv\tv ) &=& {\bf Next}(\tv \overline{s},\bf  {\tt Step}_1(\tv \overline{t}, \overline{n},{\bf c}\tv\tv)\tv\tv)\\
{\bf Step}_2 ( \tv \varepsilon, \overline{s}, {\bf c}\tv\tv) &=& {\bf c}\\
{\bf Step}_2 ( \tv {\tt s}(\overline{t}\tv \tv\tv), \overline{s}, {\bf c} \tv\tv ) &=& {\bf Step}_1(\tv \overline{s}, \overline{s}, {\bf Step}_2(\tv \overline{t}, \overline{s}, {\bf c}\tv\tv)\tv\tv)\\
&\vdots&\\
{\bf Step}_k ( \tv \varepsilon, \overline{s}, {\bf c}\tv\tv) &=& {\bf c}\\
{\bf Step}_k ( \tv {\tt s}(\overline{t}\tv \tv\tv), \overline{s}, {\bf c}\tv\tv ) &=& {\bf Step}_{k-1}(\tv \overline{s}, \overline{s}, {\bf Step}_k(\tv \overline{t}, \overline{s}, {\bf c}\tv\tv)\tv\tv).
\end{eqnarray*}
Replacing $\overline{s}$ by the {\tt Offset} in the above gives the correct bounds. 
\end{paragraph}
Finally, one simply needs to use simple L-tropical functions for computing the initial configuration, and reading the final configuration.
\end{proof}

\section{Logarithmic Space, Polylogarithmic Time}\label{sec:LP}

\begin{definition}{LP-tropical functions}\\
The class of LP-tropical functions is the closure of our basic tropical pointer functions, under tropical composition and tropical mutual recursion on notations.
\end{definition}

\begin{theorem}\label{thm:logspacepolylogtime}
The class of LP-tropical functions is the class of functions computable in logarithmic space, polylogarithmic time, with logarithmic size output. 
\end{theorem}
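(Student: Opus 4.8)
The plan is to characterize LP-tropical functions by adapting the proof of Theorem~\ref{thm:logspace}, and tracking the additional time bound that comes from forbidding recursion on values. The overall strategy mirrors the previous theorem: one inclusion shows that LP-tropical functions are computable in logspace and polylogtime, and the converse simulates a logspace, polylogtime RATM using only tropical composition and tropical mutual recursion on notations.

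For the forward direction, I would first observe that LP-tropical functions are a syntactic restriction of L-tropical functions (dropping recursion on values), so Corollary~\ref{cor:trop-inter} and Proposition~\ref{prop:trop-inter} immediately give the logarithmic space bound and the logarithmic output size. The new content is the polylogtime bound. The key point is that recursion on notations unfolds a recursive stack whose height is exactly the length $|\overline{x}|$ of the recursion argument, and since all strings have length $O(\log n)$ by the tropical interpretation, each individual recursion contributes only a factor logarithmic in $n$. Nesting such recursions a constant number of times (as permitted by a fixed function definition tree) yields a total running time that is polynomial in $\log n$, i.e. polylogarithmic. I would make this precise by a structural induction on the definition tree, where the induction hypothesis bounds both the space and the time of evaluating $f$; the recursion-on-notations case multiplies the time bound by $O(\log n)$, and composition adds bounds, so the depth of the definition tree controls the exponent of the $\log n$ factor.

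For the converse, I would reuse essentially verbatim the RATM simulation from the proof of Theorem~\ref{thm:logspace}: the configuration encoding into $2k+3$ strings of length $\lceil \log(n+1)\rceil$, the reading and updating primitives built from linear scanning (case 2 of tropical recursion on notations), and the transition map ${\bf Next}$, all of which are explicitly noted there to avoid recursion on values. The essential difference is in iterating the transition map. In Theorem~\ref{thm:logspace} the machine ran for time $n^k$, achieved by nesting $k$ recursions on values (each unfolding to $n$ steps). Here the machine runs only for polylogtime, say time $(\log n)^k$, so each of the $k$ nested iterations need only unfold $O(\log n)$ times. This is exactly what recursion on notations provides when the recursion argument has length $O(\log\log n)$, or equivalently when one iterates over a string of length $O(\log n)$ using the linear-scanning scheme. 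I would therefore replace the ${\bf Step}_i$ functions, which used recursion on values (${\tt s}$ on the recursion argument), by analogous ${\bf Step}_i$ functions defined by recursion on notations, where unfolding over a recursion argument of size $\lceil \log(n+1)\rceil$ produces $O((\log n)^k)$ applications of ${\bf Next}$ after $k$-fold nesting. Feeding the {\tt Offset} as the recursion argument then gives the correct polylogtime bound.

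The main obstacle I anticipate is getting the iteration count exactly right within the tropical tiering constraints. Recursion on notations over an argument of length $\ell$ unfolds $\ell$ times, so a single recursion on the {\tt Offset} (of length $O(\log n)$) gives $O(\log n)$ steps, and $k$-fold nesting gives $O((\log n)^k)$ steps; I must verify that this matches the polylogtime budget of the simulated machine and that the nesting is compatible with the tier assignments, in particular that the configuration ${\bf c}$ can be threaded through the safe positions while the recursion argument stays in the appropriate normal position. A secondary subtlety is confirming that forbidding recursion on values is not merely sufficient but necessary for the polylogtime bound: recursion on values over an argument of length $\ell$ unfolds exponentially, $2^\ell$ times, which on the logarithmic-size {\tt Offset} yields the polynomial time $n^k$ of Theorem~\ref{thm:logspace} rather than polylogtime. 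This dichotomy, already emphasized in the discussion following Proposition~\ref{prop:trop-inter}, is precisely what distinguishes LP-tropical from L-tropical functions, and I would make it the conceptual pivot of the argument.
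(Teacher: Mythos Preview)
Your proposal is correct and follows essentially the same approach as the paper: establish the logspace and polylogtime upper bounds by the tropical interpretation together with the observation that recursion on notations over logarithmic-length arguments unfolds only logarithmically many times, and for the converse reuse the RATM simulation of Theorem~\ref{thm:logspace} verbatim except that the ${\bf Step}_i$ functions are defined by mutual recursion on notations instead of on values. The paper's own proof is just a two-sentence pointer to exactly this modification, so your more detailed elaboration (including the structural-induction time analysis and the explicit contrast between $\ell$ versus $2^\ell$ unfoldings) is a faithful expansion of the intended argument.
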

\begin{proof}
Mutual recursion on notations, with recursive arguments of logarithmic size, are computable in polylogarithmic time, following similar arguments as in the proof of Theorem~\ref{thm:logspace}. The converse follows from the simulation in the proof of Theorem~\ref{thm:logspace} above, where mutual recursion on values for the functions ${\tt Step}_i$ is replaced by mutual recursion on notations. 
\end{proof}


\section{Alternation}\label{sec:alt}

In this section we extend the approach of Leivant and Marion~\cite{DBLP:conf/csl/Marion94} to our setting. 
Let us define a similar tropical recursion on notations with substitutions. Note that the tropical tiering discipline prevents using substitutions in case 2 of the tropical recursion on notations. Substitutions are therefore only defined for case 1 of this recursion scheme.

\begin{paragraph}{Tropical Recursion with substitutions on Notations.} Given functions $h$, $g_0$, $g_1$, $k _1$ and $k_2$, 
$$
\begin{array}{rcl}
f(  {\bf x}\tv \varepsilon,\overline{u},{\bf y} \tv {\bf z} \tv {\bf t} ) &=& h({\bf x}\tv \overline{u}, {\bf y} \tv {\bf z} \tv {\bf t})\\
f(  {\bf x} \tv {\tt s}_a(\overline{r}\tv\tv\tv), {\bf y} \tv \overline{u},{\bf z} \tv {\bf t} ) &=&g_a ( {\bf x} \tv  \overline{r}, f( {\bf x}\tv \overline{r},k_1(\tv \overline{u}\tv\tv),{\bf y} \tv {\bf z}\tv {\bf t} ),\\
\multicolumn{3}{r}{  f( {\bf x}\tv \overline{r},k_2(\tv \overline{u}\tv\tv),{\bf y} \tv {\bf z}\tv {\bf t} ),   {\bf y} \tv {\bf z} \tv {\bf t})}.\\
\end{array}
$$

\end{paragraph}
\begin{paragraph}{Tropical Recursion with substitutions on Values.} Given functions $h$, $g$, $k _1$ and $k_2$, 
$$
\begin{array}{rcl}
f(  {\bf x}\tv \varepsilon,\overline{u},{\bf y} \tv {\bf z} \tv {\bf t} ) &=& h({\bf x}\tv \overline{u}, {\bf y} \tv {\bf z} \tv {\bf t})\\
f(  {\bf x} \tv {\tt s}(\overline{r}\tv\tv\tv), {\bf y} \tv \overline{u},{\bf z} \tv {\bf t} ) &=&g ( {\bf x} \tv  \overline{r}, f( {\bf x}\tv \overline{r},k_1(\tv \overline{u}\tv\tv),{\bf y} \tv {\bf z}\tv {\bf t} ),\\
\multicolumn{3}{r}{  f( {\bf x}\tv \overline{r},k_2(\tv \overline{u}\tv\tv),{\bf y} \tv {\bf z}\tv {\bf t} ),   {\bf y} \tv {\bf z} \tv {\bf t})}.\\
\end{array}
$$
\end{paragraph}
Again, as above, we assume these recursion schemes to be mutual.

\begin{definition}{P-tropical functions}\\
The class of P-tropical functions is the closure of our basic tropical pointer functions, under tropical composition, tropical recursion on notations and on values, and tropical recursion with substitutions on notations and on values.
\end{definition}

\begin{theorem}\label{thm:P}
The class of P-tropical functions with binary output is the class P.
\end{theorem}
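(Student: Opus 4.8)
The plan is to prove the two inclusions separately, reusing the machinery already developed for Theorem~\ref{thm:logspace}. For the inclusion $\text{P-tropical} \subseteq \P$, I would argue that the tropical tiering discipline still forces all intermediate strings to have logarithmic size: recursion with substitutions is, after removing the syntactic sugar, an instance of ordinary primitive recursion, so Proposition~\ref{prop:trop-inter} (or rather its natural extension to the substitution schemes) continues to bound every output size by $\max_t\{T_t(f)+|\overline{x_t}|,c_f\}$. The crucial difference from the logspace case is the evaluation cost: recursion with substitutions on values can spawn a branching tree of recursive calls whose depth is the numerical value of the recursive argument and whose branching factor is $2$, giving potentially exponentially many distinct calls. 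I would bound the running time by observing that, since every argument is a logarithmic-size string, there are only polynomially many possible argument tuples; memoising the computation (dynamic programming over the polynomially many reachable configurations of argument values) collapses the exponential tree into a polynomial-time evaluation. This is exactly the Leivant--Marion alternation-to-P argument transported to our pointer setting.

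For the converse inclusion $\P \subseteq \text{P-tropical}$, I would build on the RATM simulation from the proof of Theorem~\ref{thm:logspace}. The configuration-encoding, configuration-reading and configuration-updating L-tropical functions constructed there are reused verbatim, as is the width-$(2k+3)$ transition map ${\bf Next}$. The new ingredient is that a polynomial-time machine may run for $n^c$ steps while using more than logarithmic space, so the pure iteration-of-${\bf Next}$ scheme of Theorem~\ref{thm:logspace} no longer suffices on its own. The idea is to exploit the substitution schemes to capture alternation: following Leivant and Marion, polynomial time equals alternating logarithmic space, so I would simulate an alternating logspace machine, where the two substitution branches $k_1,k_2$ of tropical recursion with substitutions naturally encode the two successor configurations explored at an $\exists$/$\forall$ node. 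The recursion on values supplies the $n^c$ time bound by nesting as in the ${\bf Step}_i$ functions, while the substitutions supply the branching needed to evaluate the alternating acceptance condition in place, without ever storing a whole computation tree.

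Concretely I would first recall the characterisation $\P = \mathsf{ASPACE}(\log)$, then exhibit an L-tropical predicate ${\bf Accept}$ that, given an encoded configuration and a recursion-on-values clock, returns whether the alternating machine accepts from that configuration; the inductive definition of ${\bf Accept}$ uses tropical recursion with substitutions, taking $k_1,k_2$ to compute the encodings of the two children configurations and combining the recursive results with the boolean connectives (already available as L-tropical functions from Theorem~\ref{thm:logspace}) according to whether the state is existential or universal. Correctness then reduces to checking that the tropical tiering constraints are met by this definition — in particular that the substitution functions $k_1,k_2$ and the combining function $g$ respect the four-tier discipline — and that the clock argument, initialised to the {\tt Offset}, unfolds to the required $n^c$ bound exactly as the ${\bf Step}_i$ nesting does.

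The main obstacle I expect is the time-complexity bound in the $\subseteq \P$ direction: naively, tropical recursion with substitutions on values describes an exponential-size tree of recursive calls, and one must argue that memoisation over the polynomially-bounded space of logarithmic-size argument tuples genuinely yields a polynomial-time evaluation, including a careful accounting that the non-recursive ``passed-through'' arguments (the ${\bf y},{\bf z},{\bf t}$ blocks) together with the substituted argument $\overline{u}$ range over only polynomially many values. Verifying that the substitution schemes do not secretly increase string sizes beyond logarithmic — i.e.\ that $k_1,k_2$ are genuinely non-size-increasing under the tropical discipline — and hence that the configuration space stays polynomial, is the delicate point on which the whole $\P$-membership rests.
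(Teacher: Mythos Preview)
Your proposal is correct, and the $\P \subseteq \text{P-tropical}$ direction matches the paper almost exactly: both invoke $\mathsf{ASPACE}(\log)=\P$, reuse the configuration encoding and the transition map from Theorem~\ref{thm:logspace}, and define an {\tt Accept} predicate by tropical recursion with substitutions, branching on universal/existential states and combining the two recursive calls with the L-tropical boolean connectives.

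The $\text{P-tropical} \subseteq \P$ direction, however, follows a genuinely different route. The paper does \emph{not} use memoisation: it argues that each output bit of the step function $g_a$ is a boolean function of the bits of the two recursive values $f(\ldots,k_1(\overline u),\ldots)$ and $f(\ldots,k_2(\overline u),\ldots)$, hence computable by an alternating procedure; the logarithmic space bound then comes from the tiering discipline, and one concludes via $\mathsf{Alogspace}=\P$ again. In other words, the paper proves $\text{P-tropical}\subseteq \mathsf{Alogspace}$ and closes the loop through the same Chandra--Kozen--Stockmeyer equality used in the forward direction. Your dynamic-programming argument instead shows $\text{P-tropical}\subseteq \P$ directly: since the only varying arguments inside a recursion-with-substitutions are $\overline r$ and $\overline u$, both of logarithmic size (the latter because $k_1,k_2$ take their argument at tropic~$0$, so $|k_i(\overline u)|\le\max\{|\overline u|,c_{k_i}\}$ and no accumulation occurs), the table of subcalls has polynomial size and can be filled bottom-up in polynomial time. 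Your approach is more self-contained on this half---it does not need the $\mathsf{Alogspace}\subseteq\P$ implication---while the paper's approach is more symmetric, routing both inclusions through the single identity $\mathsf{Alogspace}=\P$.
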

\begin{proof}
The result follows from $Alogspace = P$~\cite{DBLP:journals/jacm/ChandraKS81}, and Theorem~\ref{thm:logspace}. Substitutions in the tropical recursion scheme on notations amounts to alternation. Restriction to decision classes instead of function classes comes from the use of alternating Turing machines, which compute only decision problems. 

Let us first see how to simulate a logspace alternating machine with P-tropical functions.
Recall the notations and functions of the proof of Theorem~\ref{thm:logspace}. Since we now need to simulate a non-deterministic, alternating machine, we assume without loss of generality that we now have two kinds of machine states:
\begin{itemize}
\item non-deterministic universal
\item non-deterministic existential
\end{itemize}
and that non-deterministic transitions have at most two branches. Therefore, we also assume that we have one predicate that determines the kind of a state in a configuration ${\bf c}$:
$ {\tt IsUniversal} (\tv\overline{s}, {\bf c}\tv\tv )$. This predicate is assumed to output {\tt false}  or {\tt true}.

We also assume that we have two transition maps, ${\tt Next}_0(\tv \overline{s}, {\bf c}\tv\tv)$, and ${\tt Next}_1(\tv \overline{s}, {\bf c}\tv\tv)$, for computing both branches of non-deterministic transitions. For deterministic transitions, we assume both branches are the same. Finally, we also assume we have a predicate ${\tt isPositive}(\tv \overline{s}, {\bf c}\tv \tv )$, which returns ${\tt true}$ if the configuration ${\bf c}$ is final and accepting, and ${\tt false}$ otherwise. 

We define now, with substitutions, the following:
$$
\begin{array}{rcl}
{\tt Accept} (\tv \varepsilon, \overline{s},{\bf c}\tv \tv ) &=& {\tt isPositive}(\tv\overline{s}, {\bf c} \tv \tv ) \\
{\tt Accept} (\tv {\tt s}(\overline{t}),\overline{s}, {\bf c} \tv\tv)) &=& {\tt match } \; {\tt IsUniversal} (\tv \overline{s}, {\bf c}\tv \tv  ) \; {\tt with}\\
\multicolumn{3}{l}{\;\; | {\tt true} -> {\tt AND}\;( \tv  {\tt Accept} (\tv\overline{t}, {\tt Next}_0(\tv \overline{s},{\bf c}\tv\tv),{\bf c}\tv\tv),}\\
\multicolumn{3}{r}{\;\;  {\tt Accept} (\tv\overline{t}, {\tt Next}_1(\tv\overline{s}, {\bf c}\tv\tv))\tv\tv)} \\
\multicolumn{3}{l}{\;\; | {\tt false} -> {\tt OR}\;(  \tv  {\tt Accept} (\tv\overline{t}, {\tt Next}_0(\tv \overline{s},{\bf c}\tv\tv)),{\bf c}\tv\tv),}\\
\multicolumn{3}{r}{\;\;  {\tt Accept} (\tv\overline{t}, {\tt Next}_1(\tv \overline{s},{\bf c}\tv\tv))\tv\tv)}.
\end{array}
$$
Then, for $\overline{t}$ and $\overline{s}$ large enough, and an initial configuration ${\bf c}$, ${\tt Accept}(\tv \overline{t},\overline{s},{\bf c}\tv\tv)$ outputs the result of the computation of the machine. Finally, nesting up to $k$ layers of such recursion on values schemes allows, as in the proof of Theorem~\ref{thm:logspace}, to simulate a polynomial computation time.


The other direction is pretty straightforward: For any instance of a recursion scheme with substitutions, for any given values $\overline{r}$, $\overline{u}$, ${\bf x}$,  ${\bf y}$ and ${\bf z}$, each bit of \\
$g_a ( {\bf x} \tv  \overline{r}, f( {\bf x}\tv \overline{r},k_1(\tv \overline{u}\tv\tv),{\bf y} \tv {\bf z}\tv {\bf t} ), f( {\bf x}\tv \overline{r},k_2(\tv \overline{u}\tv\tv),{\bf y} \tv {\bf z}\tv {\bf t} ),   {\bf y} \tv {\bf z} \tv {\bf t})$
is a boolean function of the bits of $f( {\bf x}\tv \overline{r},k_1(\tv \overline{u}\tv\tv),{\bf y} \tv {\bf z}\tv {\bf t} )$ and $f( {\bf x}\tv \overline{r},k_2(\tv \overline{u}\tv\tv),{\bf y} \tv {\bf z}\tv {\bf t} )$. Hence, it can be computed by an alternating procedure. The space bound follows from the bound on the size of the strings, provided by the tiering discipline. 
\end{proof}

\begin{definition}{NC-tropical functions}\\
The class of NC-tropical functions is the closure of our basic pointer tropical functions, under tropical composition, tropical recursion on notations and tropical recursion with substitutions on notations.
\end{definition}

\begin{theorem}\label{thm:NC}
The class of NC-tropical functions with binary output is NC.
\end{theorem}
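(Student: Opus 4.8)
The plan is to mirror the proof of Theorem~\ref{thm:P}, but to replace the relationship $Alogspace = \P$ with the analogous characterization of $\NC$ by alternating machines, namely $\NC = \bigcup_k ASPACE\text{-}TIME(O(\log n), O(\log^k n))$: $\NC$ is exactly the class of problems decidable by an alternating Turing machine running in logarithmic space and \emph{polylogarithmic} (rather than polynomial) time. Since the definition of NC-tropical functions differs from that of P-tropical functions precisely by forbidding recursion on values (keeping only recursion on notations, with or without substitutions), the whole argument reduces to checking that this syntactic restriction corresponds exactly to the switch from polynomial to polylogarithmic alternating time. This is the same phenomenon already exploited in the passage from Theorem~\ref{thm:logspace} to Theorem~\ref{thm:logspacepolylogtime}, where replacing recursion on values by recursion on notations in the ${\tt Step}_i$ functions traded a running time of $n^k$ for a running time of $O(\log^k n)$.

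First I would establish the easy direction, that every NC-tropical function is computable in alternating logspace and polylogarithmic time. By Proposition~\ref{prop:trop-inter} and Corollary~\ref{cor:trop-inter}, tropical tiering already guarantees that all intermediate strings have logarithmic size, so the space bound is immediate and identical to the one in Theorem~\ref{thm:P}. The time bound is where the restriction bites: since recursion on values is disallowed, every recursive unfolding is a recursion on notations on a recursive argument of logarithmic size, hence produces a recursion stack of height $O(\log n)$; nesting $k$ such schemes (as in the ${\tt Step}_i$ construction) yields depth $O(\log^k n)$, exactly the polylogarithmic alternating time bound. The substitutions, as in Theorem~\ref{thm:P}, are what introduce the alternation (universal/existential branching), and the ${\tt Accept}$-style recursion with substitutions is translated into the alternating branching of the machine just as before.

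For the converse, I would simulate an alternating logspace, polylog-time machine by NC-tropical functions. I would reuse verbatim the configuration-encoding machinery and the basic ``bricks'' (boolean connectives, ${\tt if\text{-}then\text{-}else}$, string scanning, in-place bit replacement, the transition maps ${\tt Next}_0$, ${\tt Next}_1$, and the predicates ${\tt IsUniversal}$ and ${\tt isPositive}$) developed in the proofs of Theorems~\ref{thm:logspace} and~\ref{thm:P}; none of these uses recursion on values, so they are all legitimately NC-tropical. The ${\tt Accept}$ function defined by tropical recursion with substitutions on notations then simulates the alternating computation tree, with the recursion-on-notations argument $\overline{t}$ now bounding the \emph{polylogarithmic} time of the machine rather than polynomial time. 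To reach depth $O(\log^k n)$ I would nest $k$ layers of recursion-with-substitutions-on-notations schemes, precisely as the ${\tt Step}_i$ functions were nested in Theorem~\ref{thm:logspace}, but using recursion on notations throughout so that the total unfolded depth is $O(\log^k n)$ instead of $n^k$.

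The main obstacle I anticipate is bookkeeping the time bound rather than any genuinely new idea: one must verify carefully that nesting $k$ recursion-on-notations-with-substitutions schemes, each over a recursive argument of size $O(\log n)$, yields an alternating computation of depth exactly polylogarithmic, and conversely that an arbitrary polylog-time alternating machine can always be captured within a fixed finite nesting depth determined by the exponent $k$ of its running time. This requires pinning down the correct bound on the recursive arguments (the {\tt Offset}, and iterated successors thereof) so that the unfolding reaches the full $O(\log^k n)$ time horizon, and checking that the tropical tiering discipline — which forbids substitutions in case~2 (linear scanning) of recursion on notations — still permits all the substitutions needed for the alternation. Since the tiering and size bounds are inherited unchanged from the earlier theorems, and the alternation-via-substitution correspondence is exactly that of Theorem~\ref{thm:P}, I expect no essential difficulty beyond this careful accounting of recursion depth against alternating time.
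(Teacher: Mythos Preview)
Your proposal is correct and follows essentially the same route as the paper: both invoke Ruzzo's characterization $A(\text{logspace}, \text{polylogtime}) = \NC$, mirror the proof of Theorem~\ref{thm:P}, and observe that replacing recursion on values by recursion on notations trades polynomial alternating time for polylogarithmic alternating time, exactly as in the passage from Theorem~\ref{thm:logspace} to Theorem~\ref{thm:logspacepolylogtime}. The paper's write-up is in fact terser than yours on the depth-versus-time bookkeeping you flag as the main obstacle, so your anticipated difficulties are well within what the paper itself leaves to the reader.
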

\begin{proof}
The result follows from $A(logspace, polylogtime) = NC$ ~\cite{DBLP:journals/jcss/Ruzzo81}, and Theorem~\ref{thm:logspace}. Substitutions in the tropical recursion scheme on notations amounts to alternation. The proof is similar to that of Theorem~\ref{thm:P}, where additionally, 
\begin{itemize}
\item The time bound on the computation of the machine  needs only to be polylogarithmic, instead of polynomial. As in Theorem~\ref{thm:logspacepolylogtime}, tropical recursion on  notations suffices to obtain this bound, and tropical recursion on values is no longer needed. 
\item For the other direction, any bit of\\
 $g_a ( {\bf x} \tv  \overline{r}, f( {\bf x}\tv \overline{r},k_1(\tv \overline{u}\tv\tv),{\bf y} \tv {\bf z}\tv {\bf t} ),$ $f( {\bf x}\tv \overline{r},k_2(\tv \overline{u}\tv\tv),{\bf y} \tv {\bf z}\tv {\bf t} ),   {\bf y} \tv {\bf z} \tv {\bf t})$ is again a boolean function of the bits of $f( {\bf x}\tv \overline{r},k_1(\tv \overline{u}\tv\tv),{\bf y} \tv {\bf z}\tv {\bf t} )$ and $f( {\bf x}\tv \overline{r},k_2(\tv \overline{u}\tv\tv),{\bf y} \tv {\bf z}\tv {\bf t} )$. Here, this boolean function can be computed by a boolean circuit of polylogarithmic depth, hence, by an alternating procedure in polylogarihtmic time. The arguments behind this remark are the same as the ones in the proof of $A(logspace, polylogtime) = NC$.
\end{itemize}
\end{proof}

\section{Concluding Remarks}

Theorems~\ref{thm:logspace}, ~\ref{thm:logspacepolylogtime}, ~\ref{thm:P}, and~\ref{thm:NC} rely on mutual recursive schemes. As stated above, we use these mutual schemes to express a space computation of size $k \log(n)$ for any constant $k$, with binary strings of length at most $\log(n)+c$. If we were to use only non-mutual recursion schemes, we would need to have longer binary strings. This can be achieved by taking as input to our functions, not simply the {\tt Offset}, but some larger string $\#^k({\tt Offset})$, where $\#^k$ is a function that appends $k$ copies of its argument. 

It also remains to be checked wether one can refine Theorem~\ref{thm:NC} to provide characterizations of the classes NC$^i$ as in ~\cite{DBLP:journals/tcs/LeivantM00}. A first step in this direction is to define a recursion rank, accounting for the nesting of recursion schemes: then, check wether NC-tropical functions of rank $i$  are computable in NC$^i$. Conversely, check also whether the simulation of  Theorem~\ref{thm:logspace} induces a fixed overhead, and wether NC$^i$ can be encoded by NC-tropical functions of rank $i+c$ for some constant $c$ small enough. 

Finally, note that we characterize logarithmic space functions with logarithmically long output (Theorem ~\ref{thm:logspacepolylogtime}), and NC functions with one-bit output (Theorem ~\ref{thm:NC}). As usual, polynomially long outputs for these classes can be retrieved via a pointer access: it suffices to parameterize these functions with an additional, logarithmically long input, denoting the output bit one wants to compute. In order to retrieve functions with polynomially long output, this approach could also be added to the syntax, with a {\tt Write} construct similar to our {\tt Read} construct, for writing the output. 

\bibliographystyle{eptcs}
\bibliography{./PointerRec}

\end{document}